\newcommand{\N}{\mathbb{N}}
\newenvironment{proof}{{\em Proof:} } {\hfill $\Box$}
\newtheorem{theorem}{Theorem}
\def\bbN{\mathbb{N}}
\def\vX{X}
\def\vY{Y}
\def\vless{\leq}
\def\nvless{\not\leq}
\def\vnext{\rightarrow}
\begin{document}

\title{Non-dominating sequences of vectors using only resets and increments}

\author{Wojciech Czerwi\'nski,
Tomasz Gogacz,
Eryk Kopczy\'nski }

\maketitle

\begin{abstract}
We consider sequences of vectors from $\N^d$.
Each coordinate of a vector can be reset or incremented by $1$ with respect to the same
coordinate of the preceding vector.
We give an example of non-dominating sequence, like in Dickson's Lemma,
of length $2^{2^{\theta (n)}}$, what matches the previously known upper bound.
\end{abstract}

\section{Introduction}

In this paper we consider $d$-dimensional vectors of non-negative integers $\N$.
A vector $v$ \emph{dominates} vector $w$ if on each coordinate of $v$ is bigger or equal than $w$.
Sequence of vectors $v_1, v_2, \ldots$ is \emph{non-dominating} if there is no pair $v_i$, $v_j$
with $i < j$ such that $v_j$ dominates $v_i$.

Dickson has shown in~\cite{Dickson} that there is no infinite non-dominating sequence of vectors.
However, they can be arbitrarily long even when starting from a fixed vector, for example:
$(1, 0), (0, n), (0, n-1), \ldots, (0, 1), (0,0)$.

Figueira et al.~\cite{DBLP:conf/lics/FigueiraFSS11} and McAllon~\cite{DBLP:journals/tcs/McAloon84} have considered
sequences of vectors from $\N^d$ which fulfill some given restriction.
A possible restriction can be described by a function $f : \N \to \N$ such that for all $i$,
all coordinates of the $i$-th vector in the sequence do not exceed $f(i)$.
For different families of functions authors deliver different maximal lengths of non-dominating sequences,
however even for very restricted sequences bounds are non primitive recursive.

A problem stated in~\cite{DBLP:conf/wia/Zimmermann09} and~\cite{DBLP:conf/atva/HornTW08} concerns
games, but it can be formulated in terms of such sequences. 
Authors consider a game on a finite graph with special vertices - request and response vertices. 
The following condition is added to the objective of a game:
whenever a request vertex of type $k$ is visited, an response vertex of type $k$ has to be reached in the future of the play.

A \emph{summary} in such a play consists of:
\begin{itemize}
 \item(1) a position in the graph
 \item(2) waiting time since first unanswered request for each type of (vector part). 
\end{itemize}

Authors of~\cite{DBLP:conf/wia/Zimmermann09} and~\cite{DBLP:conf/atva/HornTW08} want to determine which player has a winning strategy.
They have shown that there is a winning strategy which depends only on the summary.
Moreover only strategies for which vector part is a non-dominating sequence can be considered.
Hence, an upper bound on length of such sequences would give an upper bound on the complexity of the mentioned game.
It motivates investigations of an effective variant of Dickson's Lemma where for each vector
each coordinate can be either
\begin{itemize}
 \item reset to $0$ (as a result of visiting a response vertex); or
 \item increased by $1$ (waiting phase)
\end{itemize}
with respect to the same coordinate in the preceding vector.

So far only single exponential lower bound and doubly exponential upper bound were known.
This paper is devoted to present and explain a doubly exponential lower bound.

\section{Notation and definitions}

As mentioned above we consider $d$-dimensional vectors.
In each step, every coordinate can be either reset to 0 or incremented.
For two coordinate values $a, b \in \bbN$, we say that $a \vnext b$ iff
either $b=a+1$ or $b=0$.

By $v^\ell_i \in \bbN$ we denote the value of the $\ell$-th coordinate after $i$ steps.
By $v^\ell$ we denote the sequence of values (i.e., the history) of the $\ell$-th coordinate, and
by $v_i \in \bbN^d$ we denote the $i$-th vector in the sequence (state of the whole
system after $i$ steps).

For two vectors $v,w \in \bbN^d$, we say that $v \vnext w$ iff $v^\ell \vnext w^\ell$ for each $\ell$.
We say that a sequence of vectors of length $n$ is \emph{valid}
iff $v_i \vnext v_{i+1}$ for each $i < n-1$. Moreover, we say that a sequence of
vectors is \emph{cyclic} iff it is valid and $v_{n-1} \vnext v_0$.

For two vectors $v,w \in \bbN^d$, we write $v \vless w$ iff for each coordinate $\ell$ we have $v^\ell \vless w^\ell$.
Otherwise, we write $v \nvless w$.
Thus a valid sequence $(v_i)$ is non-dominating iff whenever $i < j$ we have $v_i \nvless v_j$.

Let $L_d$ be the maximum length of a non-dominating valid sequence for the given $d$.

In this paper, we show the double exponential lower bound for $L_d$.

\section{Main Idea}
In this section we present the main idea behind our solution.
Let us first consider an easier variant, where we allow coordinates to stay unchanged in the two
consecutive vectors. It is easy to obtain an exponential lower bound -- just encode a binary counter. We start with number $111\ldots11_2$ and count down to $000\ldots00_2$. 
For example:

\vskip 0.3cm

$ \begin{matrix}
  1&1&1&1&0&0&0&0 \\
  1&1&0&0&1&1&0&0 \\
  1&0&1&0&1&0&1&0
 \end{matrix}
 $

\vskip 0.3cm

Because we count down it is easy to see that there is no dominating pair of vectors.
Note that the same implementation of binary counter works when we disallow fixing coordinates.

\vskip 0.3cm

$ \begin{matrix}
  1&2&3&4&0&0&0&0 \\
  1&2&0&0&1&2&0&0 \\
  1&0&1&0&1&0&1&0
 \end{matrix}
 $

\vskip 0.3cm

The main idea behind the double exponential sequence is to encode counters with higher basis. 
However, a problem appears: we cannot decrease a coordinate.
It was irrelevant in case of binary counter - decreasing from $1$ to $0$ is just a reset.

Here induction shows up. When we want to decrease a coordinate in the bigger counter $B$
we reset it and wait until it grows to the appropriate value. 
In the same time a counter $S$ with smaller base is launched.
Roughly speaking, because $S$ is decreasing while coordinates of $B$ are growing up
there is no dominating pair during a growing period.

\section{Overview}

\def\Lc{L^\circ}

Let $\Lc_d$ be the maximum length of a non-dominating {\bf cyclic} sequence for the given $d$;
obviously $L_d \geq \Lc_d$. For example, the following sequence of four vectors shows that $\Lc_2 \geq 4$:
\[
(1,1), (0,2), (1,0), (0,0)
\]

We will show that $\Lc_{d+2} \geq \Lc_d (2\Lc_d+1)$. This is done by looping the
$d$ coordinates, and adding two new coordinates in a way which makes the
whole long sequence non-dominating.
These two new coordinates are meant to implement something like a counter of base $2n-1$.
By repeating this several times  we get a double exponential
lower bound for $\Lc_d$, and also for $L_d$.

As an additional verification, we have also implemented our construction in C++.
Source code available at:

\begin{center}
\verb|http://www.mimuw.edu.pl/~erykk/papers/vecseq.cpp|
\end{center}

\section{Construction}

\begin{theorem} $\Lc_{d+2} \geq \Lc_d (2\Lc_d+1)$ 
\label{step}
\end{theorem}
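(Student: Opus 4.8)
The plan is to take an optimal non-dominating cyclic sequence on $d$ coordinates, call it $(u_0, \ldots, u_{m-1})$ with $m = \Lc_d$, and to build a long non-dominating cyclic sequence on $d+2$ coordinates by interleaving copies of this sequence with controlled behaviour on the two new coordinates. The intended length $\Lc_d(2\Lc_d + 1)$ strongly suggests that the construction runs through the inner $d$-dimensional cycle roughly $2\Lc_d + 1$ times (or runs a length-$m$ block that many times), while the two extra coordinates act as a base-$(2m-1)$ counter that distinguishes these passes. So first I would fix notation: let $c = 2m+1$ (or $2m-1$, matching the ``base $2n-1$'' remark) be the period, write the new sequence as blocks indexed by $k$, and within each block reuse the inner cycle on the first $d$ coordinates since it is cyclic and returns to its start.

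\medskip

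Second, I would design the two new coordinates $x = v^{d+1}$ and $y = v^{d+2}$ so that their joint history realises a counter that increases monotonically in an appropriate sense across blocks, while each individual coordinate obeys the reset-or-increment rule $\vnext$. The natural idea is to have one coordinate perform a slow ascent (incrementing steadily through a whole inner cycle, then resetting) and the other encode the block index, so that as we move from block $k$ to block $k+1$ the pair $(x,y)$ can never dominate an earlier pair. I would check the $\vnext$ constraints coordinate by coordinate: each new coordinate must either reset to $0$ or increase by exactly $1$ at every step, including at the seams between blocks and at the final wrap-around that closes the cycle (enforcing $v_{n-1} \vnext v_0$).

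\medskip

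The core of the argument is verifying non-domination for every pair $v_i \nvless v_j$ with $i < j$. I would split this into two regimes. Within a single block, domination on the full $(d+2)$-vector would force domination on the first $d$ coordinates, which is impossible by the non-dominating property of the inner cycle $(u_k)$; the two new coordinates only need to not spoil this. Across different blocks $k < k'$, I would argue that the counter coordinates $(x,y)$ are arranged so that at least one of them is strictly smaller in the later block at the relevant positions, which is exactly the role of the base-$(2m-1)$ counter: as $B$ (the bigger counter, built from the extra coordinates) grows, a smaller-base object $S$ (the inner cycle) decreases, so no later vector can dominate an earlier one. This is the induction described in the Main Idea section, where resetting a coordinate in the big counter and waiting for it to regrow is safe because the inner sequence is simultaneously running through its non-dominating cycle.

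\medskip

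The main obstacle I expect is the cross-block comparison at the moments when a new coordinate has just been reset: right after a reset, the new coordinate is $0$ and therefore trivially fails to dominate, but I must ensure that during the subsequent growth phase the inner $d$ coordinates (or the other new coordinate) provide the needed strict decrease to block domination against all earlier vectors, not just against the immediately preceding block. Making this quantitative -- choosing how far each new coordinate climbs and how the two new coordinates are phased relative to the inner cycle so that every one of the $2m+1$ passes is mutually non-dominating and the whole thing closes up cyclically -- is where the real work lies, and is precisely what pins down the factor $2\Lc_d + 1$.
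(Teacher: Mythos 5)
Your proposal correctly reconstructs the architecture of the paper's argument (repeat the inner $d$-dimensional cycle, add two coordinates implementing a decrementing counter, and use the inner cycle's non-domination to cover the pairs the counter cannot), but it stops exactly where the proof actually lives: you never define the two new coordinates, and you say yourself that ``making this quantitative \ldots is where the real work lies.'' That work is not routine, and the one concrete design you do sketch --- one coordinate ascending through a block while ``the other encode[s] the block index'' --- cannot be implemented as stated, because a coordinate that encodes the block index would have to stay constant within a block, and the $\vnext$ relation forbids staying constant: every coordinate must reset to $0$ or increase by exactly $1$ at every step. The paper's solution is to make \emph{both} new coordinates sawtooth waves of period $2n$ (where $n = L^\circ_d$), phase-shifted by $n$ relative to each other, with peaks that drop by one per period: explicitly $\vX_{2ni+j} = \max(j-i,0)$ and $\vY_k = \vX_{(k+n) \bmod m}$ with $m = n(2n+1)$. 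The phase shift guarantees that at every moment at least one of $\vX, \vY$ is strictly positive and carries the decreasing quantity, which is the mechanism your sketch is missing.

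There is also a structural flaw in your case split. You propose ``within a block the inner cycle blocks domination; across blocks the counter does.'' The correct dichotomy is different: for $a < b$, the inner coordinates give $v_a \nvless v_b$ exactly when $a \bmod n < b \bmod n$, and this can fail \emph{across} blocks just as easily as it holds within one; conversely, when $a \bmod n \geq b \bmod n$ (in particular when $a \equiv b \pmod n$, so the inner coordinates of $v_a$ and $v_b$ are \emph{identical}), the new coordinates must do all the work, including for pairs inside a single period of the sawtooth. The paper handles this with a four-way case analysis on whether $j_a$ and $j_b$ (the positions within their length-$2n$ periods) lie in the first or second half, showing in each case that $\vX$ or $\vY$ strictly decreases from $a$ to $b$; for example, when both lie in the first half one checks $\vY_a = n + j_a - i_a > n + j_b - i_b = \vY_b$. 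Without the explicit formulas and this verification (plus the check that $\vX$ and $\vY$ are valid and close up cyclically, which is needed to iterate the construction), the proposal does not yet establish the bound.
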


\begin{proof}
Let $(u_0, \ldots, u_{n-1})$ be a non-dominating cyclic sequence of length $n$ and dimension
$d$.

We construct a sequence $(v_0, \ldots, v_{m-1})$ of length $m=n(2n+1)$ and dimension $d+2$.
For convenience, we name the two new coordinates $\vX$ and $\vY$; therefore, $\vX_k = v^{d+1}_k$
and $\vY_k = v^{d+2}_k$. The construction is as follows:

\begin{itemize}
\item (1) $v_{k}^{\ell} = u_{k \bmod n}^{\ell} $ for $\ell \leq d, 0 \leq k < m$
\item (2) $\vX_{2ni+j} = \max(j-i, 0)$ for $0 \leq j < 2n, 0 \leq 2ni+j < m$
\item (3) $\vY_{k} = \vX_{(k+n) \bmod m}$ for $0 \leq k < m$
\end{itemize}

We will show that this sequence is indeed a non-dominating cyclic sequence.

The following table shows the evolution of the two new coordinates.

\vskip 0.3cm

\[
\begin{array}{c|ccccc|cccc}
j & 0 & 1 & 2 & \ldots & n-1 & n & n+1 & \ldots & 2n-1 \\
\hline 
\vX_j & 0 & 1 & 2 & \ldots & n-1 & n & n+1 & \ldots & 2n-1 \\
\vY_j & n & n+1 & n+2 & \ldots & 2n-1 & 0 & 0 & \ldots & n-2 \\
\hline 
\vX_{2n+j} & 0 & 0 & 1 & \ldots & n-2 & n-1 & n & \ldots & 2n-2 \\
\vY_{2n+j} & n-1 & n & n+1 & \ldots & 2n-2 & 0 & 0 & \ldots & n-3 \\
\hline 
\vX_{4n+j} & 0 & 0 & 0 & \ldots & n-3 & n-2 & n-1 & \ldots & 2n-3 \\
\vY_{4n+j} & n-2 & n-1 & n & \ldots & 2n-3 & 0 & 0 & \ldots & n-4 \\
\hline 
\vdots & \vdots & \vdots & \vdots & \ddots & \vdots & \vdots & \vdots & \ddots & \vdots \\
\hline 
\vX_{2n(n-1)+j} & 0 & 0 & 0 & \ldots & 0 & 1 & 2 & \ldots & n \\
\vY_{2n(n-1)+j} & 1 & 2 & 3 & \ldots & n & 0 & 0 & \ldots & 0 \\
\hline 
\vX_{2n^2+j} & 0 & 0 & 0 & \ldots & 0 &&&& \\
\vY_{2n^2+j} & 0 & 1 & 2 & \ldots & n-1 &&&&
\end{array}
\]

\vskip 0.3cm

We will start by showing that the sequence $v^\ell$ is cyclic for each $\ell$.
For $\ell \leq d$, the sequence $v^\ell$ is simply $u^\ell$ repeated $2n+1$ times.
Since $u^\ell$ is cyclic, $v^\ell$ is cyclic too. We also need to check the coordinates
$\vX$ and $\vY$.

For the coordinate $\vX$, we need to verify three cases:

\begin{itemize}
\item $\vX_{2ni+j} \vnext \vX_{2ni+j+1}$, where $2ni+j+1 < m$, follows immediately
from the formula (2) for $j \neq 2n-1$.

\item $\vX_{m-1} \vnext \vX_0$ because $\vX_{0} = 0$.

\item $\vX_{2ni+(2n-1)} \vnext \vX_{2n(i+1)}$ because
$\vX_{2n(i+1)} = \max(0 - (i+1), 0) = 0$.
\end{itemize}

The sequence $\vY$ is a cyclic shift of $\vX$. We already know that
$\vX$ is cyclic, so $\vY$ is cyclic too.

Now, we need to show that our cyclic sequence is non-dominating:
whenever $a < b$, we have $v_a \nvless v_b$. 
If $a \bmod n < b \bmod n$, we know that
$u_{a \bmod n} \nvless u_{b \bmod n}$. Since $u_{a \bmod n}$ is a part of $v_a$, and
$u_{b \bmod n}$ is a part of $v_b$, we get that $v_a \nvless v_b$. 

Now, suppose that $a \bmod n \geq b \bmod n$.
Let $a = 2ni_a + j_a$, and $b = 2ni_b + j_b$, where $0 \leq j_a, j_b < 2n$.
Since $j_a$ can be less than $n$ or not, and $j_b$ can be less than $n$ or not, 
there are four cases, in each one we easily show that $v_a \nvless v_b$.

\begin{itemize}
\item $j_a, j_b <n$, $j_a \geq j_b$, and $i_a < i_b$. In this case we have $\vY_a = n + j_a - i_a
> n + j_b - i_b = \vY_b$. 

\item $j_a, j_b \geq n$, $j_a \geq j_b$, and $i_a < i_b$. In this case we have $\vX_a = j_a - i_a
> j_b - i_b = \vX_b$.

\item $j_a \geq n$, $j_b < n$, $j_a - n \geq j_b$, and $i_a < i_b$. In this case we have 
$\vX_a = j_a - i_a > n+(j_a-n) - i_a > n + j_b - i_b \geq \max(n + j_b - i_b, 0) = \vX_b$.

\item $j_a < n$, $j_b \geq n$, $j_a + n \geq j_b$, and $i_a \leq i_b$. In this case we have 
$\vY_a = n + j_a - i_a \geq j_b - i_b > \max((j_b - n) - i_b, 0) = \vY_b$, where the last inequality follows
from the fact that $j_b > i_b$.
\end{itemize}

This completes the proof.
\end{proof}

\begin{theorem} $L_d \geq \Lc_d \geq 2^{3 \cdot 2^{\lfloor d/2 \rfloor-1}-1}$ for $d \geq 2$.
\end{theorem}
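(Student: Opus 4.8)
The plan is to combine the recurrence $\Lc_{d+2} \geq \Lc_d(2\Lc_d+1)$ from Theorem~\ref{step} with the base case $\Lc_2 \geq 4$ exhibited in the Overview, and then solve the resulting recurrence explicitly. Since $L_d \geq \Lc_d$ holds by definition, it suffices to bound $\Lc_d$ from below.

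First I would restrict attention to even dimensions and set $a_k = \Lc_{2k}$, so that the recurrence becomes $a_{k+1} \geq a_k(2a_k+1) \geq 2a_k^2$ with $a_1 \geq 4$. Taking logarithms base $2$ and writing $b_k = \log_2 a_k$, this reads $b_{k+1} \geq 2b_k + 1$ with $b_1 \geq 2$. I would then solve the linear recurrence $b_{k+1} = 2b_k + 1$: its general solution is $b_k = C \cdot 2^k - 1$, and the initial condition $b_1 = 2$ forces $C = 3/2$, giving $b_k = 3\cdot 2^{k-1} - 1$. A straightforward induction then yields $a_k \geq 2^{3\cdot 2^{k-1}-1}$, i.e. $\Lc_{2k} \geq 2^{3\cdot 2^{k-1}-1}$, which is exactly the claimed bound for $d = 2k$. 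Note the base case lines up: at $k=1$ the closed form reads $2^{3-1}=4$, matching $\Lc_2 \geq 4$.

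To cover odd dimensions I would first establish the monotonicity $\Lc_{d+1} \geq \Lc_d$. Given any non-dominating cyclic sequence in dimension $d$, append an all-zero coordinate. The extended sequence is still cyclic, since $0 \vnext 0$ (a reset), and still non-dominating, since adding a coordinate equal to $0$ in both vectors can never create a new domination (as $0 \vless 0$, the original witness coordinate for $v_a \nvless v_b$ still works). Hence for every $d \geq 2$ we have $\Lc_d \geq \Lc_{2\lfloor d/2\rfloor} \geq 2^{3\cdot 2^{\lfloor d/2\rfloor - 1}-1}$, and combining with $L_d \geq \Lc_d$ completes the argument.

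The induction and the log-linear recurrence are entirely routine; the only point deserving a moment's care is the treatment of odd $d$ through the monotonicity step, which is why I would state and verify $\Lc_{d+1} \geq \Lc_d$ explicitly before assembling the final chain of inequalities.
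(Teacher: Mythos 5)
Your proposal is correct and follows essentially the same route as the paper: the base case $L^\circ_2 \geq 4$, the recurrence from Theorem~\ref{step}, the reduction $\Lc_{d+2} \geq 2\Lc_d^2$, and induction on the even dimensions, with odd dimensions handled via monotonicity. The only differences are cosmetic: you phrase the induction through the logarithmic recurrence $b_{k+1}\geq 2b_k+1$ rather than verifying the closed form directly, and you spell out the append-a-zero-coordinate argument for $L^\circ_{d+1}\geq L^\circ_d$, which the paper dismisses as obvious.
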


\begin{proof}
Obviously $L_d \geq \Lc_d$ and $\Lc_{d+1} \geq \Lc_d$. Therefore, 
it is enough to show the claim for $d=2c$.

For $c=1$ we have $\Lc_{2c} \geq 4$, which satisfies the formula.

For $c+1$ we apply Theorem \ref{step} and the induction hypothesis:

\[\Lc_{2c+2} \geq \Lc_{2c} (2\Lc_{2c} + 1) \geq 2 (2^{3 \cdot 2^{c-1}-1})^2 =
2^{3 \cdot 2^{(c+1)-1}-1}\]
\end{proof}

\bibliographystyle{plain}
\bibliography{sequences}

\end{document}